\documentclass{article}





\usepackage[nonatbib,preprint]{neurips_2023}

\usepackage[utf8]{inputenc} 
\usepackage[T1]{fontenc}    
\usepackage{booktabs}       
\usepackage{amsfonts}       
\usepackage{nicefrac}       
\usepackage{microtype}      
\usepackage{xcolor}         
\usepackage{amsmath}
\usepackage{graphicx}
\usepackage{comment}

\title{Deep Generalized Green's Functions}

%


\author{%
  Rixi Peng \\
  Department of ECE\\
  Duke University\\
  Durham, NC 27708 \\
  \texttt{rixi.peng@duke.edu} \\
  \And
  Juncheng Dong \\
  Department of ECE\\
  Duke University\\
  Durham, NC 27708 \\
  \texttt{juncheng.dong@duke.edu} \\
  \And
  Jordan Malof \\
  Department of Computer Science\\
  University of Montana\\
  Missoula, MT 59812 \\
  \texttt{jordan.malof@umontana.edu} \\
  \And
  Willie J. Padilla \\
  Department of ECE\\
  Duke University\\
  Durham, NC 27708 \\
  \texttt{willie.padilla@duke.edu} \\
  \And
  Vahid Tarokh \\
  Department of ECE\\
  Duke University\\
  Durham, NC 27708 \\
  \texttt{vahid.tarokh@duke.edu} \\
}

\newtheorem{theorem}{Theorem}[section]

\newtheorem{lemma}[theorem]{Lemma}

\newtheorem{definition}[theorem]{Definition}


\newcommand{\lap}{\mathcal{L}} 
\newcommand{\argr}{\boldsymbol{r}}
\newcommand{\argxi}{\boldsymbol{\xi}}







\begin{document}

\maketitle

\begin{abstract}
In this study, we address the challenge of obtaining a Green's function operator for linear partial differential equations (PDEs). The Green's function is well-sought after due to its ability to directly map inputs to solutions, bypassing the need for common numerical methods such as finite difference and finite elements methods. However, obtaining an explicit form of the Green's function kernel for most PDEs has been a challenge due to the Dirac delta function singularity present. To address this issue, we propose the Deep Generalized Green's Function (DGGF) as an alternative, which can be solved for in an efficient and accurate manner using neural network models. The DGGF provides a more efficient and precise approach to solving linear PDEs while inheriting the reusability of the Green's function, and possessing additional desirable properties such as mesh-free operation and a small memory footprint. The DGGF is compared against a variety of state-of-the-art (SOTA) PDE solvers, including direct methods, namely physics-informed neural networks (PINNs), Green's function approaches such as networks for Gaussian approximation of the Dirac delta functions (GADD), and numerical Green's functions (NGFs). The performance of all methods is compared on four representative PDE categories, each with different combinations of dimensionality and domain shape. The results confirm the advantages of DGGFs, and benefits of Generalized Greens Functions as an novel alternative approach to solve PDEs without suffering from singularities. 

\end{abstract}

\section{Introduction}

Efficiently solving partial differential equations (PDEs) poses a significant challenge in the realm of scientific computing. These equations are ubiquitous, appearing in a multitude of disciplines ranging from physics and engineering to social science. However, the majority of PDEs cannot be explicitly solved, necessitating the deployment of numerical methods such as the finite difference method (FDM) or the finite element method (FEM). These methods operate by discretizing space and / or time, thus generating solutions on pre-established grids. The accuracy of these numerical solutions is contingent upon the preciseness of the discretization. Enhancing the grid density to achieve superior accuracy often implies a substantial escalation in computational resources, including memory, CPU time, and effort.

In order to mitigate these constraints, the development and implementation of alternative, mesh-free methods for PDEs is highly desirable. Recently, deep neural networks (DNNs) have emerged as a promising mesh-free solution capable of addressing a wide array of PDEs across various fields such as science, engineering, and mathematics. The most notable DNN-based PDE solving method is the physics-informed neural network (PINN) \cite{raissi2019physics}. In this approach, a DNN is used to represent a solution function which satisfies the PDE at every test point within the domain. Despite their potential benefits, a significant drawback of these DNN-based strategies lies in their training cost, which remains invariable for each unique PDE problem since a DNN model must be trained for each new problem . Furthemore, training a DNN to accurately solve a specific PDE problem can be a laborious process, dependent on the unique attributes of each problem as dictated by the differential operator, domain, boundary condition, and input function. 

\begin{figure}
  \centering
  \includegraphics[width=1\columnwidth]{./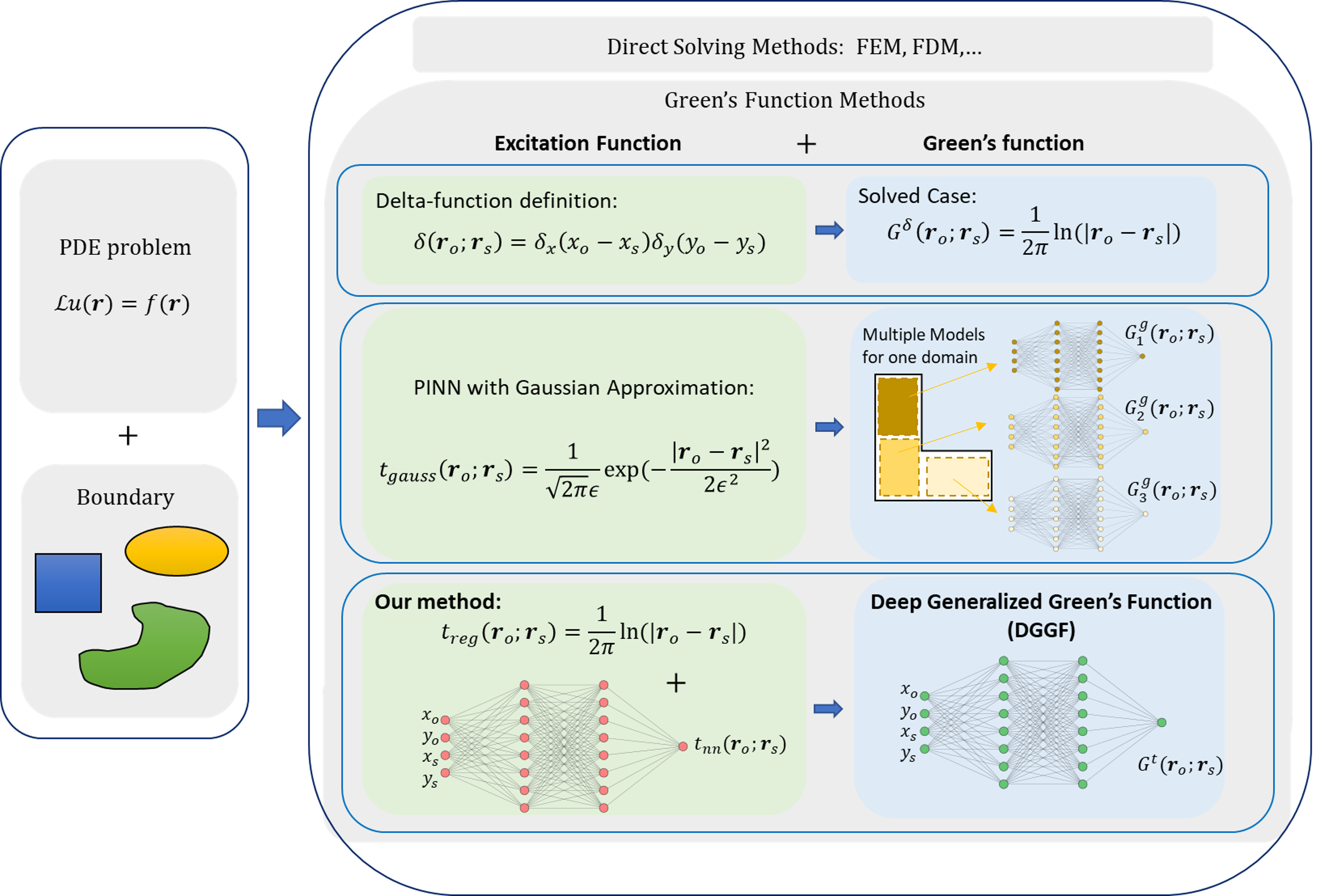}
  \caption{Schematic showing two types of PDE solvers: Direct Methods solve each PDE problem individually, and the Green's function method retrieves a kernel function first and constructs the solution in convolutional forms.}
  \label{fig:main_schematic}
\end{figure}


To address the aforementioned limitations of DNN-based PDE solutions, several strategies have been proposed to reduce their computational costs. Towards this goal, notable examples include the DeepRitz \cite{deepritz_2018}, which tackles the variational form of PDEs, and the Fourier Neural Operator (FNO) \cite{li2020fourier}, which leverages the Fourier transform methods widely used in many PDE problems. Another promising yet under-explored technique is the combination of Green's function methods with neural networks. Typically the Green's function is limited to linear PDEs and attempts to solve for a general integral kernel which is then used to construct a solution by convolving with the system input (or stimulus). The Green's function possesses several useful characteristics:
\begin{itemize}
\item Property (i): Regardless of the input function or boundary conditions, it transforms the process of solving the PDE into a convolution computation across the domain utilizing the Green's function as the kernel.
\item Property (ii): The Green's function naturally arises in boundary value and initial value PDE problems, and simplifies the problem by reducing the dimensionality.
\end{itemize}
Therefore, once a Green's function is constructed for a particular differential operator it is \textit{re-usable}, so that solutions for that operator can be constructed efficiently even if the input function and boundary conditions vary.

Despite their remarkable advantages, explicit Green's functions are only available for a limited number of scenarios, such as the Helmholtz equation in an unbounded homogeneous domain. In most cases, either the Green's function cannot be derived analytically or its resulting form becomes excessively intricate, thereby diminishing its utility; this has motivated the use of DNNs to approximate Green's functions.  Training DNNs for this purpose is challenging however, because it requires the derivatives of the network output to approximate a singularity function \cite{zemanian1987distribution}, namely a Dirac delta function. \cite{GaussApprox_2022}.   

Here we propose and demonstrate a generalized \emph{Green's function} method and the neural network representation, termed as \emph{Deep Generalized Green's functions} (DGGF), which addresses the singularity issue while inheriting all of the benefits of both using mesh-free neural network method and using the Green's function method to solve the linear PDE problems including Property (i) and (ii) above. Using a number of benchmark PDE problems, we demonstrate that the DGGF method provides stable solutions, exhibits a fast convergence rate, and achieves high accuracy.

Our contributions are summarized below: 
\begin{itemize}
    \item We construct Deep Generalized Green's Functions (DGGFs) that are applicable to all linear PDEs, and amenable to stable and efficient numerical computation. We prove that DGGFs maintain desirable Properties (i) and (ii) of the original Green's function,
    \item We demonstrate achieved accuracy in constructing PDE solutions using DGGFs for four different types of PDEs in various dimensions and for different boundary shapes and conditions, and
    \item We experimentally demonstrate that our proposal affords faster neural network training with superior performance compared to state of the art DNN based Green's function approaches.
\end{itemize}

\section{Related Work}
We formally define a general \textit{PDE problem} with a partial differential operator $\lap:\mathcal{U}\to\mathcal{U}^{*}$ of order $p>0$ from the solution function space $\mathcal{U}$ to its dual input function space $\mathcal{U}^{*}$ with support in a compact set $D\subset \mathbb{R}^{n}$. Let $\mathcal{U}$ be the Sobolev space of the same order of $p$. The boundary condition $\mathcal{B}$ specifies the solution function constraints on the domain boundary $\partial D$. Combining these two factors leads to,
\begin{align}
\begin{aligned}\label{eqn:def_pde_prob}
    &\lap u(\argr) = f(\argr), \quad \forall \argr \in D,\\
    &\mathcal{B}_{\argr}u(\argr)|_{\argr\in\partial D}=h(\argr),\\
\end{aligned}
\end{align}\label{eqn:pde}
\noindent where the input function $f\in \mathcal{U}^{*}: D\to \mathbb{R}$ is usually given and the solution function $u\in \mathcal{U}: D\to \mathbb{R}$ is to be solved. 

Many established PDE solvers can be chosen to solve the generic problem defined in Eq. \ref{eqn:def_pde_prob} based on the characteristics of the specific problem. Traditional methods include Fourier and Laplace Transform Methods, FEM and FDM etc. The idea of using a neural network to efficiently solve PDEs, to the best of our knowledge, dates back to the 1990s~\cite{chen_nn_pde_1996,Lagaris_1998,ann_solve_de_1990}. However, the lack of computational power and auto-differentiation capabilities severely limited the efficacy of neural networks in this domain. More recently, there has been a resurgence of the application of neural networks in solving PDEs due to works on physics-informed neural networks (PINNs)~\cite{PINN_struc_2022} in various scientific fields such as fluid dynamics~\cite{PINN_FD_2020,PINN_FD_2021}, thermodynamics~\cite{PINN_TD_2021}, and electromagnetism~\cite{PINN_EM_2021}. A detailed discussion of PINNs and their applications can be found in \cite{PINN_survey}. Several recent studies have proposed PINN improvements including reweighting loss terms in the objective function~\cite{PINN_reweighting_2021}, new loss terms~\cite{PINN_new_loss_2022}, and more complicated neural network structures~\cite{PINN_struc_2022}. Different DNN-based PDE solving methods are listed and compared in the Table \ref{tab:pde_table_1}. Another notable method for solving PDEs is DeepRitz~\cite{deepritz_2018}, which solves the variational form of the original strict differential form, resulting in relaxed constraints and often more robust convergence to the solution function. However, the limitation of applying this approach to calculating the Green's function is in their time-consuming process of evaluating a multidimensional integral at every epoch of the training process.

To the best of our knowledge, there have been a limited number of studies for approximating the Green's function with DNNs. A notable approach, referred to as GaussNet and described in ~\cite{GaussApprox_2022}, utilizes Gaussians to approximate the Dirac delta function. By employing smooth Gaussians, this method significantly relaxes the singularity of the Dirac delta function. The use of Gaussian functions offers a systematic approach to achieve high accuracy by reducing the width of the Gaussian. However, this method represents one Green's function on a single domain with multiple networks, making it challenging to apply to larger domains. Training may also be complex depending on the desired accuracy level, and approximation errors may be of concern.

The BI-GreenNet method is another important technique for expressing the non-singular part of Green's functions on different domains using DNNs \cite{lin2023bi}. This method takes advantage of the linearity of the Green's function and performs well when the singular part of the Green's function can be computed analytically. However, a limitation of this method is the need for an analytic solution for the singular part.

Various related data-driven approaches ~\cite{deepgreen_2021,green_human_2022,boulle2022data} learn the Green's function of an unknown system by incorporating it into a convolution integral and attempting to construct the solution function. These methods address a different type of problem, namely inferring the properties of an unknown system, instead of directly solving the underlying PDE. By utilizing the Green's function in a convolutional form, these approaches eliminate the need for evaluating the Dirac delta function. However, they may require fine sampling of the input function making them potentially challenging to implement in various scenarios of interest.

\begin{table}[h]
\caption{Properties of different PDE methods}
\label{tab:pde_table_1}
\vskip 0.15in
\begin{center}
\begin{small}
\begin{sc}
\begin{tabular}{cccc}
\toprule
Method & Fixed Grid & Re-usability & Data Requirement \\
\midrule
PINN & No & No & No\\
DGGF & No & Yes & No\\
GaussNet & No & Yes & No\\
BI-GreenNet & No & Yes & No\\
\bottomrule
\end{tabular}
\end{sc}
\end{small}
\end{center}
\vskip -0.1in
\end{table}

\section{Deep generalized Green's function}
\subsection{Preliminaries}


The Green's function corresponding to a \textit{linear} PDE problem described by Equation~\ref{eqn:def_pde_prob}, if it exists, is defined as the solution for a Dirac delta input function. To be precise, \textit{the Green's function problem} is formulated as, denoted by $G: D \times D \rightarrow \mathbb{R}$:
\begin{align}\label{eqn:green_pde_prob}
    \lap_{\argr}G(\argr,\argxi) &= \delta(\argr-\argxi), \, \forall \, \argr, \argxi \in D\\
    G(\argr,\argxi) &= 0, \,\, \,  \forall \, \argxi\in D^{\circ} \:\: \, \,  \forall\:\argr \in \partial D
\end{align}
where the subscript on $\lap_{\argr}$ denotes that the differential operator only operates on $\argr$ and $D^{\circ}$ denotes the interior of domain $D$. 
It should be noted that the Green's function must satisfy the same \textit{type} of boundary condition (Dirichlet, Neumann, or Robin) as that of the original PDE problem and should be in general zero on the boundary. The Green's function method then gives the solution to the problem as,
\begin{equation}
    u(\argr) = \int_{D}G(\argr,\argxi)f(\argxi)d\argxi + \int_{\partial D} h(\argr,\argxi')\frac{\partial_{\argr} G}{\partial \mathbf{n}}d\argxi',
\end{equation}
where $\mathbf{n}$ is the outward normal of the boundary ~\cite{kreyszig2008advanced}.

\subsection{Problem setting}
The goal of the Green's function is to find the operator which maps the input function $f$ to the solution function $u$. The formal description of solving the Green's function is outlined below. Let $\mathcal{U}$ be the Sobolev space of the order $p$ defined on a bounded domain $D\subset \mathbb{R}^{n}$. Let $\lap: \mathcal{U}\to \mathcal{U}^{*}$ be the linear partial differential operator such that $\lap u=f, u\in \mathcal{U}, f\in \mathcal{U}^{*}$. The corresponding Green's function operator $\mathcal{G}: \mathcal{U}^{*}\to \mathcal{U}$ can be formally written as,

\begin{align}\label{eqn:greens_operator}  
    u=(\mathcal{G}f)(\argr) = \int_{D}G(\argr,\argxi)f(\argxi)d\argxi,
\end{align}

where the kernel $G: D\times D\to \mathbb{R},$ is called the Green's function. 

The singular nature of the Dirac delta function poses challenges. 
Instead of solving the Green's function kernel with the formal definition, in this work, we assume $f$ is twice differentiable and subsequently we propose a generalized Green's function operator, 
\begin{align}
\mathcal{G}^{t}:  h \to u,\: \text{where} \:\:h=\Delta f
\end{align}

where the Laplacian operator $\Delta$ is given by$\Delta:=\sum_{i}\frac{\partial^{2}}{\partial x_i^2}$. 
With this generalized operator, an ordinary functions can replace the Dirac delta function in the optimization problem, formulated as 
\begin{align}\label{eqn:min_generalized}  
    \min_{\theta\in \mathbb{R}^{p}} \mathbb{E}_{\argr,\argxi \in  D}|\lap G^{\circ}_{\theta}(\argr, \argxi)-t(\argr,\argxi)|^2,
\end{align}
where $t(\argr,\argxi)$ is used in place of the Dirac delta function. We then derive an explicit form of the function $t(\argr,\argxi)$ and show that the resulting operator maps to the solution function of the PDE. The input function $t(\argr,\argxi)$ can be divided into the singular term $t_{s}(\argr,\argxi)$ and the regular term $t_{r}(\argr,\argxi)$. The singular term depends on the dimension $n$ of the problem while the regular term depends on the domain boundary condition. 
\begin{lemma}\label{lemma:exc}[Decomposition of the alternative input function]
Assume that both the differential operator $\lap$ is linear and the boundary condition is $t=0$. Then,
\begin{align}
    &\quad\quad t = t_{s} + t_{r} 
\end{align}
where $t_s$ and $t_r$ respectively satisfy:
\begin{align}
    &\Delta t_{s}(\argr,\argxi) = \delta (\argr-\argxi), \quad \lim_{\argr\to \infty}t_{s}(\argr,\argxi)  = 0, \\
    &\Delta t_{r}(\argr,\argxi) = 0, \quad t_{r}(\argr,\argxi)|_{\partial D} = -t_{s}(\argr,\argxi)|_{\partial D}.
\end{align}
\end{lemma}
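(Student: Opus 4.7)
The plan is to verify the decomposition by constructing $t_s$ and $t_r$ explicitly, then appealing to linearity of $\Delta$ and uniqueness of the associated Dirichlet problem. The key observation is that the equations on $t_s$ and $t_r$ decouple: $t_s$ depends only on the ambient dimension $n$, while $t_r$ depends only on the shape of $D$ through its boundary data.

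First, I would construct $t_s$ as the fundamental solution of the Laplacian on $\mathbb{R}^n$, whose closed form is classical: for $n=2$, $t_s(\argr,\argxi)=\tfrac{1}{2\pi}\log|\argr-\argxi|$, and for $n\ge 3$, $t_s(\argr,\argxi)=-\tfrac{1}{(n-2)\omega_n}|\argr-\argxi|^{2-n}$, where $\omega_n$ is the surface area of the unit $(n-1)$-sphere. A standard distributional computation confirms $\Delta_{\argr} t_s(\argr,\argxi)=\delta(\argr-\argxi)$, and the explicit form gives the decay at infinity. This step is routine once the fundamental solution is invoked.

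Next, I would construct $t_r$ by solving the homogeneous Dirichlet problem $\Delta t_r=0$ in $D$ with $t_r(\argr,\argxi)|_{\argr\in\partial D}=-t_s(\argr,\argxi)|_{\argr\in\partial D}$, for each fixed $\argxi\in D^{\circ}$. Because $\argxi$ lies in the interior, the function $\argr\mapsto t_s(\argr,\argxi)$ is smooth on $\partial D$, so the boundary data is admissible. Invoking the standard existence and uniqueness theorem for the Dirichlet problem on a bounded domain with sufficiently regular boundary (e.g., Lipschitz) yields a unique harmonic $t_r$.

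Finally, by linearity of $\Delta$, the sum $t:=t_s+t_r$ satisfies $\Delta t=\delta(\argr-\argxi)+0=\delta(\argr-\argxi)$ in $D$, together with $t|_{\partial D}=t_s|_{\partial D}-t_s|_{\partial D}=0$, matching the prescribed boundary condition on $t$. Uniqueness of solutions to this Poisson problem with zero Dirichlet data then forces $t=t_s+t_r$, completing the decomposition. The main obstacle is the middle step: ensuring the Dirichlet problem for $t_r$ is well-posed requires some mild regularity of $\partial D$, which I would state as a standing assumption (consistent with the paper's compact domain $D$) rather than prove from scratch. Everything else reduces to the classical potential-theoretic facts about the Laplacian.
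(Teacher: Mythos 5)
Your construction---taking $t_s$ to be the classical fundamental solution of the Laplacian and $t_r$ the harmonic function on $D$ with Dirichlet data $-t_s|_{\partial D}$, then summing by linearity and uniqueness---is essentially the same argument the paper gives, which likewise invokes the known analytic form of $t_s$ in each dimension and reduces $t_r$ to a standard boundary value problem. One caveat you share with the paper: for $n=2$ the logarithmic fundamental solution does not in fact vanish as $|\argr|\to\infty$, so the stated decay condition on $t_s$ holds literally only for $n\ge 3$ and should otherwise be read as the standard normalization of the free-space solution.
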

Equation \ref{lemma:exc} itself forms a Green's function problem of the Poisson equation subject to the boundary condition. Next, we replace the Dirac delta function in the definition of Green's function in Eq.~\eqref{eqn:green_pde_prob} with $t(\argr,\argxi)$:
\begin{equation}\label{eqn:general_green_pde}
    \lap_{\argr}G^{t}(\argr,\argxi) = t(\argr,\argxi). 
\end{equation}
We term the corresponding solution $G^{t}$ as the \emph{generalized Green's function} subject to the input $t$.

In addition to Eq.~\ref{eqn:general_green_pde}, we also impose the following two requirements on $G^{t}$ on the boundary $\partial D$:  
\begin{align}
     \Delta_{\argr}G^{t}(\argr,\argxi)|_{\partial D}=0, \\
     \mathcal{B}G^{t}(\argr,\argxi)=0.
\end{align}

\begin{definition}[Generalized Green's Function]
The generalized Green's Function $G^t: D \times D \rightarrow \mathbb{R}$ on a compact domain is defined to be the solution to the following two connected PDE problems: 
\begin{align}
\Delta t(\argr,\argxi) &= \delta(\argr-\argxi),\\
\lap_{\argr}G^{t}(\argr,\argxi)& = t(\argr,\argxi),\\
\Delta_{\argr}G^{t}(\argr,\argxi)|_{\partial D}&=0,\\
\mathcal{B}G^{t}(\argr,\argxi)&=0.
\end{align}
\end{definition}

We note that the third condition can be used to derive the boundary condition for the alternative input function $t$. For instance, if the original problem is the Poisson equation, i.e., $\lap = \Delta$, the first condition on $G^{t}$ directly leads to $t(\argr,\argxi)=0$ on the boundary. For other types of operators, the boundary condition can be derived similarly. 

A generic analytic solution to the singular part $t_s(\argr,\argxi)$ for various domain dimensions is known with vanishing boundary condition, i.e., $\lim_{\argr\to\infty}t_{s}(\argr,\argxi)=0$. For $2$-dimensional domain, $t_{s}(\argr,\argxi) = ln(|\argr-\argxi|)$ and for $3$-dimensional domain, $t_{s}(\argr,\argxi) = 1/|\argr-\argxi|$. Since $t_{s}(\argr,\argxi)$ is expressed in the analytic form, the boundary value for the regular part $t_{r}(\argr,\argxi)$ can be easily calculated on arbitrarily shaped finite domain by $t_{r}(\argr,\argxi) = -t_{s}(\argr,\argxi)$. Then the problem reduces to solving for $t_r(\argr,\argxi)$ which is a standard boundary value problem in PDE that can be easily solved. Once the format of $t(\argr,\argxi)$ is fully determined via the procedures above, the Green's function problem is transformed to the generalized one with $t(\argr,\argxi)$ as the alternative input function, which is expressed exactly as normal functions.

\subsection{Model}\label{subsec:solving_generalized_greens}
In this section, we propose a unified paradigm for solving PDEs with neural networks and the generalized Green's function. Our method involves three steps which generate one auxiliary neural network and one primary neural network for the generalized Green's function. Following the concept of physics-informed neural network, we represent the solution function with parameterized neural networks while using the partial differential equations and boundary conditions to define the loss function which is minimized during training. The three steps are 
\begin{itemize}
    \item Step 1: Solve for the alternative input function $t$.
    \item Step 2: Solve the generalized Green's function $G^{t}$. 
    \item Step 3: Construct the solution using the generalized Green's function neural network.
\end{itemize}
It should be noted that step 1 and 2 are only executed once for a fixed PDE operator and domain $D$ but the trained neural network can be reused in step 3. We elaborate on each step in the following sections.
\subsubsection{Solving the alternative input function $t$.}
As stated in the Lemma \ref{lemma:exc}, the regular part $t_{r}(\argr,\argxi)$ is the solution to the boundary value problem defined with $\lap_{\argr} t_{r}(\argr,\argxi) = 0, t_{r}(\argr,\argxi) = g(\argr,\argxi)$ where the boundary values are determined by the analytic function of $t_{s}(\argr, \argxi)$. Suppose a parameterized neural network $\hat{t}_{r,\theta}(\argr,\argxi)$ is used to approximate $t_{r}(\argr,\argxi)$. For each training iteration, a batch of random interior point pairs of $(\argr_{i}\in D^{\circ}, \argxi_{i}\in D^{\circ}), i=1,2,...,N'_{dm}$ and a batch of random boundary point pairs $(\argr_{j}\in \partial D, \argxi_{j}\in D^{\circ}), j=1,2,...,N'_{bd}$ are independently sampled using the Latin hypercube sampling method. These sampled points across the domain or on the boundary are used to evaluate the residual loss or boundary loss, respectively. The total loss function at each iteration is defined as,
    \begin{align}\label{eqn:t_loss}
        L_{\theta}&=\frac{\lambda_{res}}{N'_{dm}}\sum_{i}^{N'_{dm}}L_{res}+\sum_{j}^{N'_{bd}}\lambda_{bd}L_{bd}\\
        &=\frac{\lambda_{res}}{N'_{dm}}\sum_{i}^{N_{dm}}|\lap\hat{t}_{\theta}(\argr_{i},\argxi_{i})|^{2} + \frac{\lambda_{bd}}{N'_{bd}}\sum_{j}^{N_{bd}}|\hat{t}_{r,\theta}(\argr_{j},\argxi_{j})-g(\argr_{j},\argxi_{j})|^2
    \end{align}\label{eqn:loss}
    
$\lambda_{res},\lambda_{db}$ are the weights to balance the magnitude between the PDE residual loss and the boundary loss. The partial derivatives in $\lap$ are all computed using the AutoGrad package in PyTorch. The stochastic gradient descent (SGD) method is used to minimize the loss and results in an accurate neural network representation of $\hat{t}_{r,\theta}(\argr,\argxi)$. 

\subsubsection{Solving for the generalized Green's function $G^{t}$.}
With the help of the auxiliary neural network model $\hat{t}_{r,\theta}(\argr,\argxi)$, the alternative excitation function $t(\argr,\argxi)$ can be readily evaluated at every point across the domain, i.e. $t(\argr, 
\argxi) = t_{s}(\argr,\argxi)+\hat{t}_{r,\theta}(\argr,\argxi)$. Now we use the same framework as above to train the generalized Green's function $\hat{G}^{t}_{\phi}$. For each training iteration, a batch of random interior point pairs of ${\argr_{i}\in D^{\circ}, \argxi_{i}\in D, i=1,2,...,N_{dm}}$ and a batch of random boundary point pairs ${(\argr_{j}\in \partial D, \argxi_{j}\in D^{\circ}), j=1,2,...,N_{bd}}$ are independently sampled using again the Latin hypercube sampling method. For this problem, the residual loss takes the alternative excitation function $t$ and sets the boundary values are zero, which leads to the definition of the total loss as,
\begin{align}\label{eqn:gt_loss}
    L_{\phi}&=\frac{\lambda_{res}}{N_{dm}}\sum_{i}^{N_{dm}}L_{res}+\frac{\lambda_{res}}{N_{bd}}\sum_{j}^{N_{bd}}\lambda_{bd}L_{bd}\\
    &=\frac{\lambda_{res}}{N_{dm}}\sum_{i}^{N_{dm}}|\lap\hat{G}^{t}_{\phi}(\argr_{i},\argxi_{i})-(t_{s}+\hat{t}_{r,\theta})(\argr_{i},\argxi_{i})|^{2} + \frac{\lambda_{res}}{N_{bd}}\sum_{j}^{N_{bd}}|\hat{G}^{t}_{\phi}(\argr_{j},\argxi_{j})|^2.
\end{align}

Once the generalized Green's function network finishes training, it can be used in Eq. (15) to determine the solution function for any excitation function $f$ with, 
\begin{equation}
    u(\argr_{0}) = \sum_{j} w_{j}\hat{G}^{t}_{\phi}(\argr_{0},\argxi_{j})\Delta f(\argxi_{j}).
\end{equation}
where $\argxi_{i,j}=1,2,...,n$ are the quadrature points for fast evaluation of the multi-dimensional integration, and $w_{j}$ are corresponding quadrature weights. It should be noted that the solution function values at multiple $\argr_{0}$ can be computed in parallel.

\section{Numerical experiments}
We demonstrate the performance of DGGF in solving PDE problems compared to several recent state-of-the-art baseline methods for different types of problems. We include different domains, including a square (SQ), a circle (CR), and two B-spline curve enclosed loops (B1 \& B2), and two 3D boundaries including a cube with 1/8 corner cut (CC), and an ellipsoid (EP), along with four different PDE types, listed in Table \ref{tab:pde_table_2}. The exact domain shapes are illustrated in Fig. \ref{fig:domains}.

\begin{figure}
  \centering
  \includegraphics[width=0.5\columnwidth]{./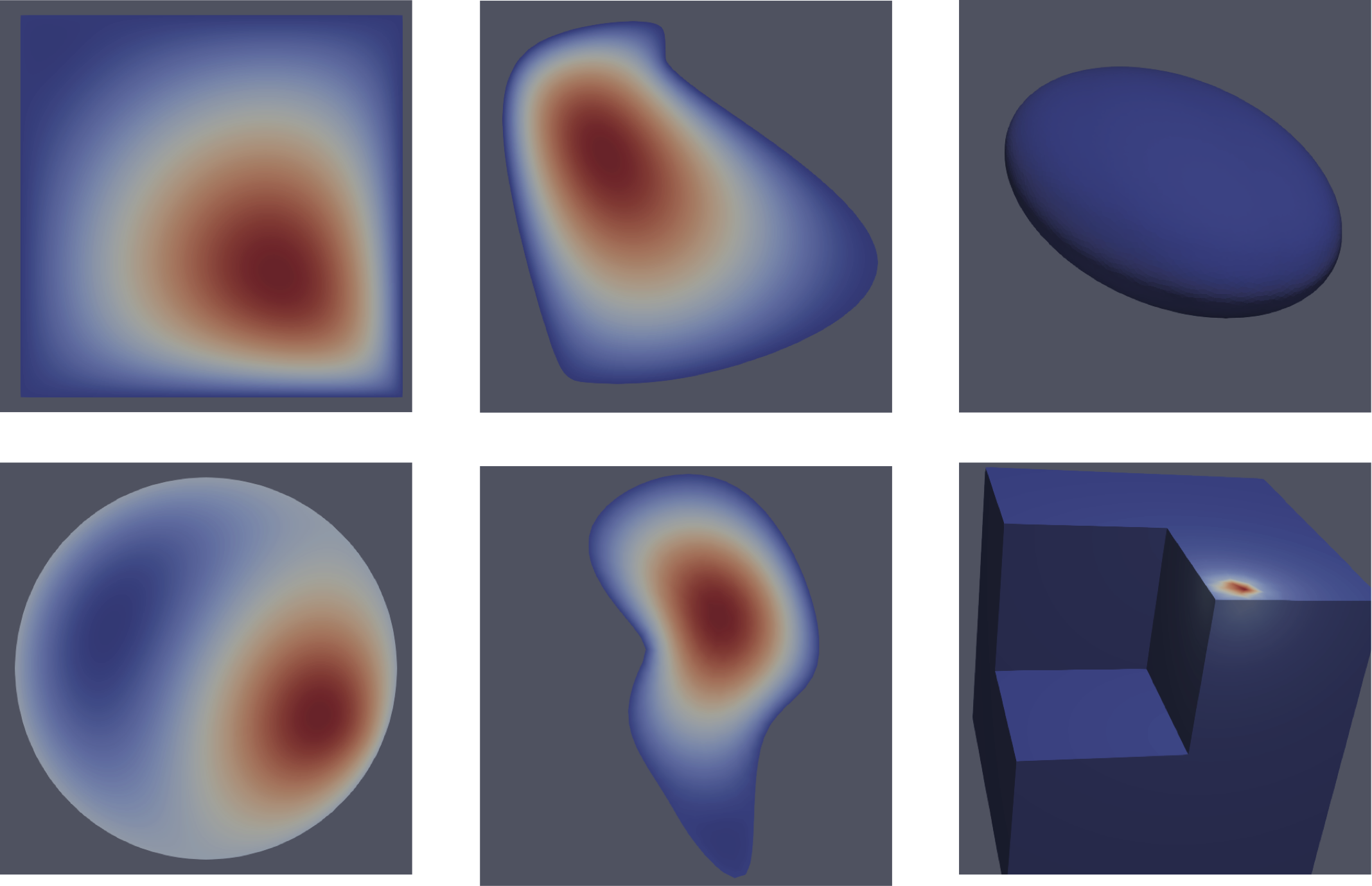}
  \caption{Different types of domain boundaries explored in the experiments. The color map indicates the generalized Green's function of a given fixed source location for the Poisson equation.}
\end{figure}\label{fig:domains}

Specifically, we demonstrate that DGGF has higher accuracy compared to GaussNet and constructs faster results compared to PINN on several classes of widely-studied PDEs.  We present results of different dimensions and boundary shapes in our experiments. For each type of the PDE problems, we are interested in the performance of DGGF compared to the following baseline models:
\begin{itemize}
    \item Method (I): Gaussian Approximation of the Dirac delta function (GaussNet),
    \item Method (II): Physics-informed neural network (PINN),
    \item Method (III): Numeric Green's Function (NGF),
\end{itemize}

\begin{table}[ht]
\caption{PDE problems explored in the experiments. Boundary shapes include square (SQ), circular (CR), two Jordan curves B-spline 1 (B1) and B-spline 2 (B2), corner cut cube (CC), and an ellipsoid (EP).
}
\label{tab:pde_table_2}
\vskip 0.15in
\begin{center}
\begin{small}
\begin{sc}
\begin{tabular}{cccc}
\toprule
Class & Expression & Boundary Shape \\
\midrule
Poisson & $\sum_{x,y,z}\partial^{2}_{x,y,z}$ & SQ, CR, B1, B2, CC, EP  \\
Helmholtz & $\sum_{x,y,z}\partial^{2}_{x,y,z}+k^{2}$ & SQ, CR, B1, B2, CC, EP  \\
Heat & $\sum_{x,y,z}\partial^{2}_{x,y,z}-\partial_{t}$ & SQ, CR, B1, B2, CC, EP \\
Klein-Gordon & $\sum_{x,y,z}\partial^{2}_{x,y,z}-\partial_{t}^2 -k^{2}$ & SQ, CR, B1, B2, CC, EP \\
\bottomrule
\end{tabular}
\end{sc}
\end{small}
\end{center}
\vskip -0.1in
\end{table}

We use the FEM solver FEniCSx ~\cite{AlnaesEtal2015,ScroggsEtal2022,BasixJoss,LoggEtal2012,LoggWells2010,LoggEtal_10_2012,KirbyLogg2006,LoggEtal_11_2012,OlgaardWells2010,AlnaesEtal2014,kirby2010} to obtain the solution close to the ground truth for all the experiments using very fine meshes.   

\begin{figure}[pbh!]
  \centering
  \includegraphics[width=0.75\columnwidth]{./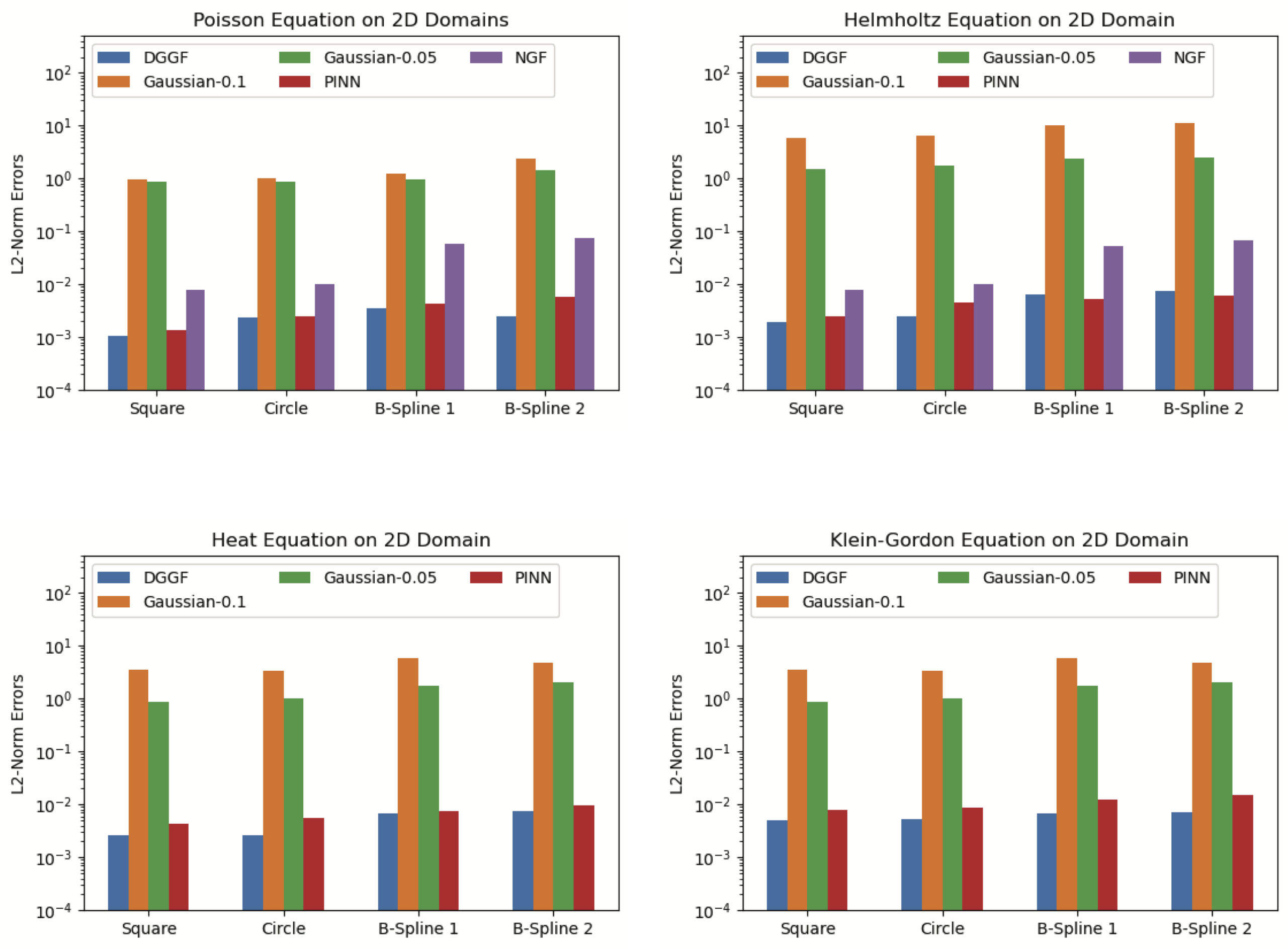}
  \caption{Results of 2D experiments with different PDEs and boundaries showing the mean L2 norm error, including the DFFG (our approach shown as the blue colors), Gaussian (0.05 green colors) and (0.1 shown as the orange colors), PINN (purple), and NGF (red). }
\end{figure}\label{fig:res_2d}

\subsection{Model training}
For our method, the PINN, and the GaussNet, we use the exact same DNN structure and training procedures throughout for all experiments for fair comparison. The DNN comprises $8$ hidden layers with Rectified Linear Units (ReLU) activation functions, and $100$ neurons in each layer. Note that the GaussNet method prescribes using multiple networks, instead of a single network, to represent the fast-varying Green's function on the whole domain. We tested this multi-network approach on simple 2D domain problems with 16 networks and results are included in the Appendix. The complexity of arbitrary segmentation of irregular domains hinders the implementation of this method on all the cases explored in this study. The hyperparameters involved in the training of the models are $N_{dm}$, $N_{bd}$, $N'_{dm}$, $N'_{bd}$,$\lambda_{res}$, $\lambda_{bd}$. Please see Sec. \ref{subsec:solving_generalized_greens} for interpretation of these hyperparameters. For simplicity, we always set $\lambda_{res}=1$ and search $\lambda_{bd} \in [1, 5, 50]$, $N_{dm} \in [50000, 100000, 150000, 200000]$, $N_{bd} \in [25000, 50000, 100000, 75000, 150000, 200000]$. The lists of values for $N'_{dm}$ and $N'_{bd}$ are the same as that of $N_{dm}$ and $N_{bd}$. We use the learning rate of $1\times 10^{-3}$ with ADAM~\cite{adam} and zero regularization. The GaussNet method requires one additional hyperparameter, which is the width of the Gaussian used to approximate the Dirac delta function. For this, we tested two values $\epsilon \in \{.05, 0.1\}$, for every problem considered below. 

\begin{figure}[ptb!]
  \centering
  \includegraphics[width=0.75\columnwidth]{./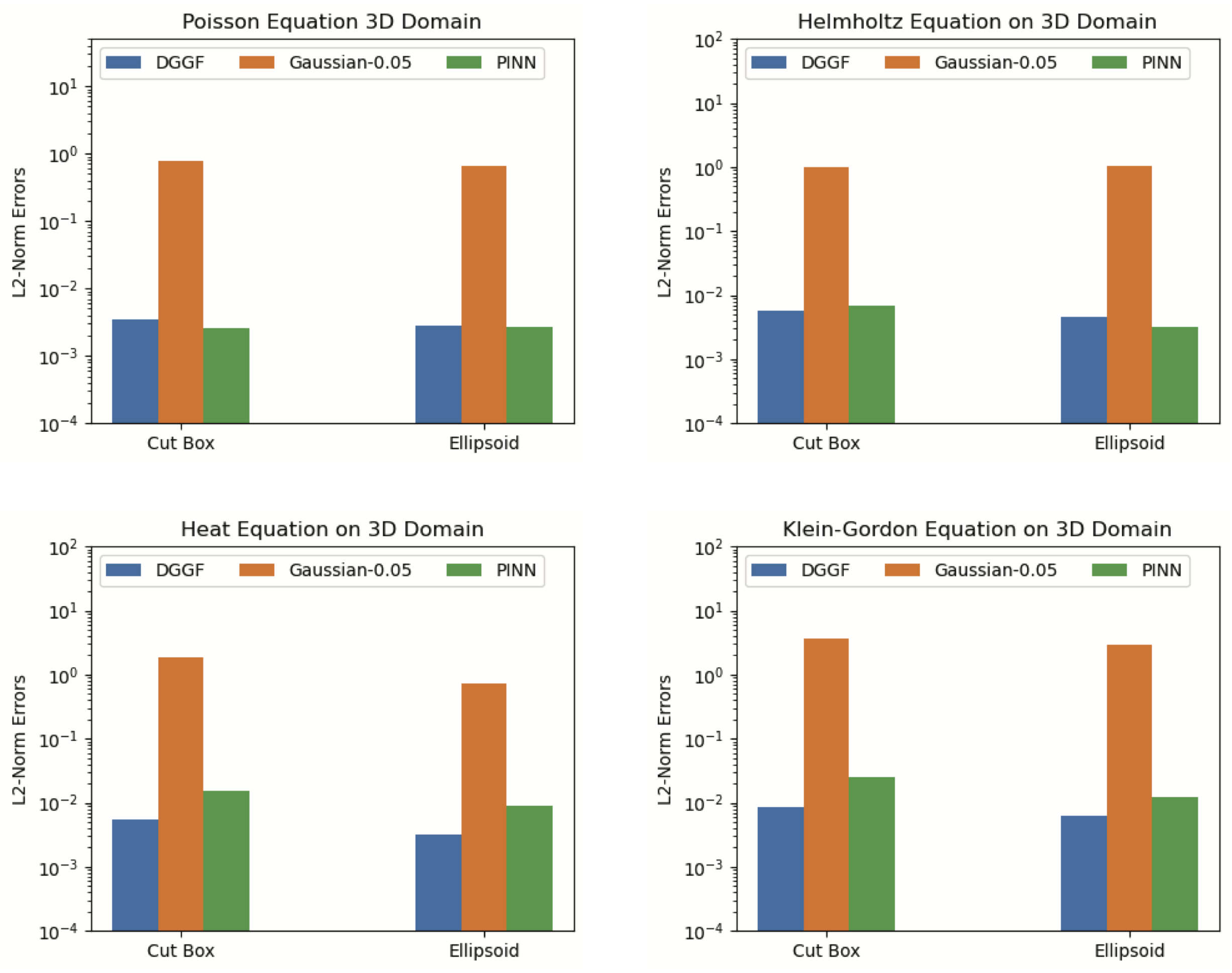}
  \caption{Results of 3D experiments with different PDEs and boundaries showing the mean L2 norm error, including the DFFG (our approach shown as the blue colors), Gaussian (0.05 green colors) and (0.1 shown as the orange colors), and PINN (red).}
\end{figure}\label{fig:res_3d}

\subsection{Results}
\subsubsection{Accuracy}
A comparison of the accuracy between DGGF and baseline methods are present in Figure~\ref{fig:res_2d}. It can be observed that the DGGF outperforms both the GaussNet for both choices of the Gaussian width explored with the single network representation. In particular, the DGGF realizes at least three orders of magnitude smaller errors than that of the GuassNet, and is comparable to the PINN method. These results indicate that by transforming the Green's function to the generalized Green's function, the neural network model can learn a better kernel to construct the solution functions. Despite comparable accuracy, our method is 20,000 times faster than the PINN in the solution function construction step, with the parallel convolution in our method taking ~$0.08$ s for 2D domains, compared to a $~30$ min training time for the PINN method on any single 2D problem. 

\begin{figure}[ptb!]
\begin{center}
(a){\includegraphics[width=0.3\columnwidth]{./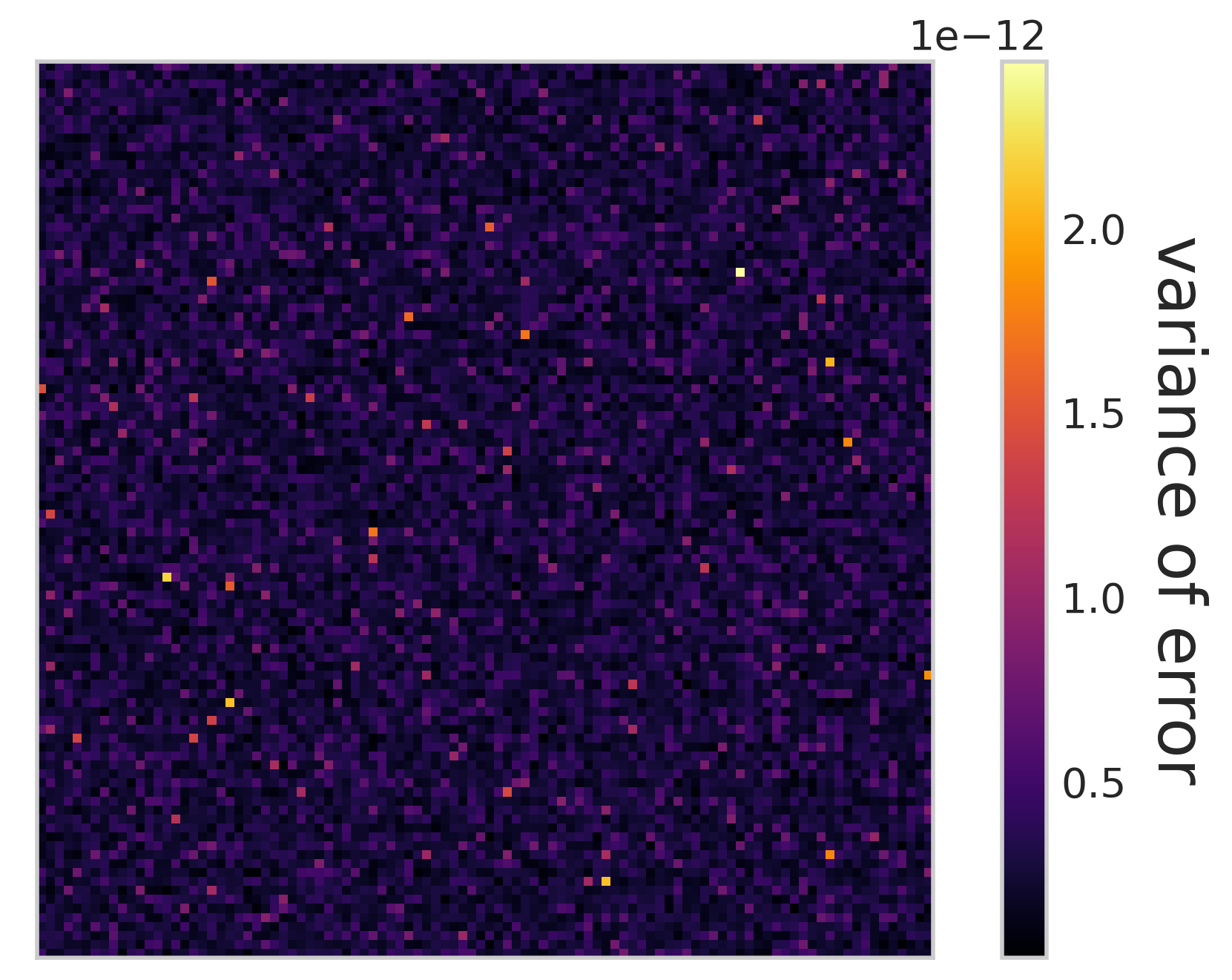}}
(b){\includegraphics[width=0.3\columnwidth]{./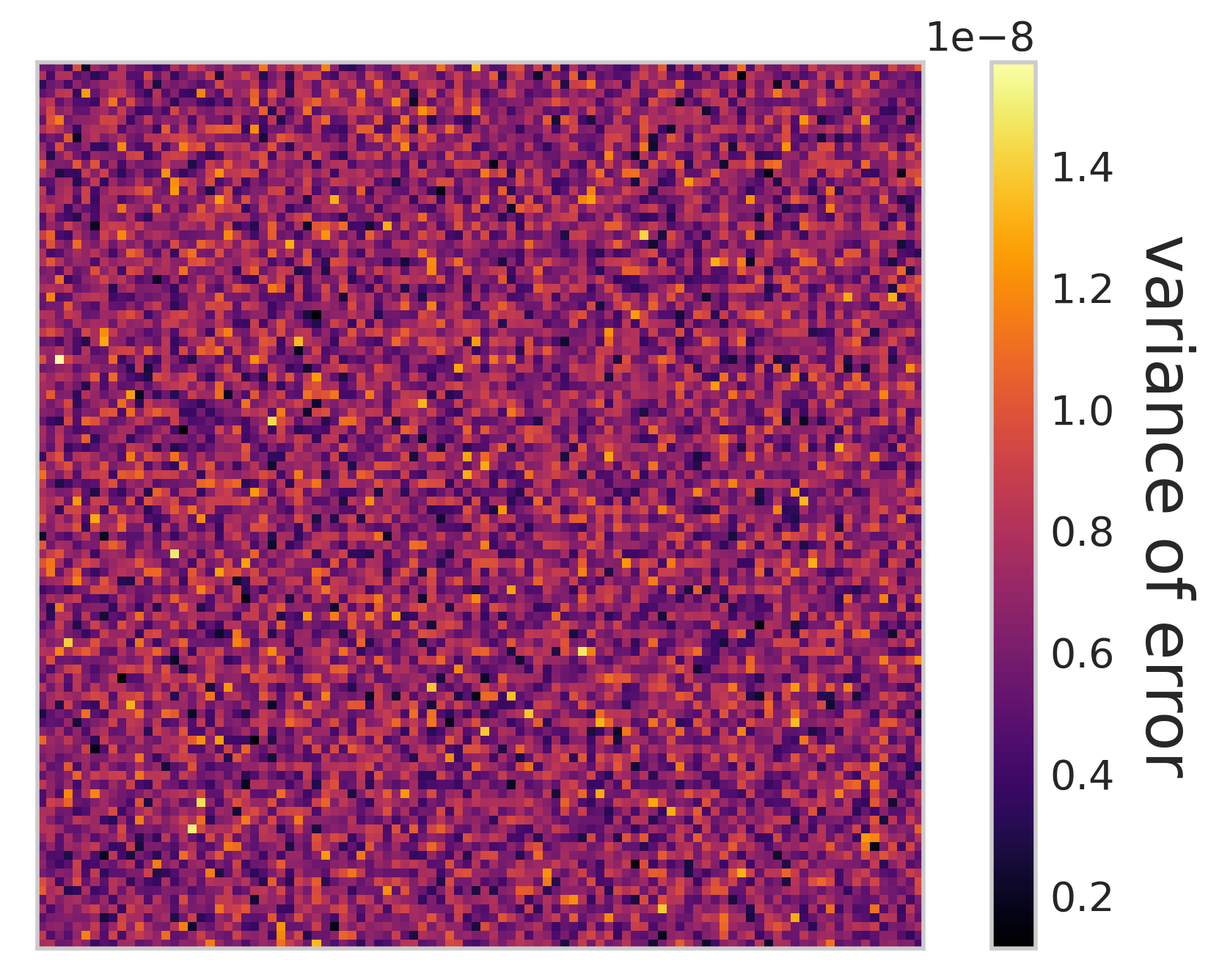}}
\caption{Stability of DGGF. Here we perform only Step 2 ($G^t$) in (a) and Step 1 + Step 2 (b). The variance of accuracies across points in the domain (due to random initialization of neural networks) is used to measure the stability of DGGF. Shown in (a) and (b) are variances of accuracies for each point in the 2-d Box domain. }
\label{fig:ablation_stability}
\end{center}
\end{figure}


\subsection{Ablation Study}
\subsubsection{Stability}
We use $2$-d Poisson problem with Box boundary to study the stability of DGGF, i.e., the variance of error caused by different random initializations of the neural networks. We study two scenarios: (1) only training the networks in step 2 of DGGF and (2) training networks in both step 1 and step 2. Specifically, in both scenarios, we use network of $8$ layers, all with $100$ neurons. We set $N_{dm}=100000$, $N_{bd}=300000$ and $\lambda_{bd} = 5$. Then we train $30$ neural networks with different initializations. The variances of their accuracies for estimating Green's Function at various points in domain are then computed. As can be seen in Figure~\ref{fig:ablation_stability}, the variance of accuracy across the domain is uniformly small. In particular, the variance of only training the network for $G^t$ is to the order of $10^{-11}$, demonstrating the stability of DGGF.   
 
\section{Conclusions}
To harness the full potential of Green's function reusability, we introduce the concept of deep generalized Green's function. By circumventing the singularity associated with the Dirac delta function while preserving theoretical accuracy, our method offers a practical solution. Empirical tests conducted on a comprehensive selection of partial differential equations (PDEs) have confirmed the effectiveness of our approach. Notably, DGGF exhibits lower computational resource requirements, faster convergence, and, crucially, enables efficient reuse for solving PDEs of the same type.



\bibliographystyle{unsrt}

\begin{thebibliography}{10}

\bibitem{raissi2019physics}
Maziar Raissi, Paris Perdikaris, and George~E Karniadakis.
\newblock Physics-informed neural networks: A deep learning framework for
  solving forward and inverse problems involving nonlinear partial differential
  equations.
\newblock {\em Journal of Computational physics}, 378:686--707, 2019.

\bibitem{deepritz_2018}
Bing Yu et~al.
\newblock The deep ritz method: a deep learning-based numerical algorithm for
  solving variational problems.
\newblock {\em Communications in Mathematics and Statistics}, 6(1):1--12, 2018.

\bibitem{li2020fourier}
Zongyi Li, Nikola Kovachki, Kamyar Azizzadenesheli, Burigede Liu, Kaushik
  Bhattacharya, Andrew Stuart, and Anima Anandkumar.
\newblock Fourier neural operator for parametric partial differential
  equations.
\newblock {\em arXiv preprint arXiv:2010.08895}, 2020.

\bibitem{zemanian1987distribution}
Armen~H Zemanian.
\newblock {\em Distribution theory and transform analysis: an introduction to
  generalized functions, with applications}.
\newblock Courier Corporation, 1987.

\bibitem{GaussApprox_2022}
Yuankai Teng, Xiaoping Zhang, Zhu Wang, and Lili Ju.
\newblock Learning green’s functions of linear reaction-diffusion equations
  with application to fast numerical solver.
\newblock In {\em Mathematical and Scientific Machine Learning}, pages 1--16.
  PMLR, 2022.

\bibitem{chen_nn_pde_1996}
Tianping Chen and Hong Chen.
\newblock Universal approximation to nonlinear operators by neural networks
  with arbitrary activation functions and its application to dynamical systems.
\newblock {\em IEEE Transactions on Neural Networks}, 6(4):911--917, 1995.

\bibitem{Lagaris_1998}
I.E. Lagaris, A.~Likas, and D.I. Fotiadis.
\newblock Artificial neural networks for solving ordinary and partial
  differential equations.
\newblock {\em {IEEE} Transactions on Neural Networks}, 9(5):987--1000, 1998.

\bibitem{ann_solve_de_1990}
Hyuk Lee and In~Seok Kang.
\newblock Neural algorithm for solving differential equations.
\newblock {\em Journal of Computational Physics}, 91:110--131, 11 1990.

\bibitem{PINN_struc_2022}
Pu~Ren, Chengping Rao, Yang Liu, Jian-Xun Wang, and Hao Sun.
\newblock {PhyCRNet}: Physics-informed convolutional-recurrent network for
  solving spatiotemporal {PDEs}.
\newblock {\em Computer Methods in Applied Mechanics and Engineering},
  389:114399, feb 2022.

\bibitem{PINN_FD_2020}
Rui Wang, Karthik Kashinath, Mustafa Mustafa, Adrian Albert, and Rose Yu.
\newblock Towards physics-informed deep learning for turbulent flow prediction.
\newblock In {\em Proceedings of the 26th ACM SIGKDD International Conference
  on Knowledge Discovery and Data Mining}, KDD '20, page 1457–1466, New York,
  NY, USA, 2020. Association for Computing Machinery.

\bibitem{PINN_FD_2021}
Xiaowei Jin, Shengze Cai, Hui Li, and George~Em Karniadakis.
\newblock {NSFnets} (navier-stokes flow nets): Physics-informed neural networks
  for the incompressible navier-stokes equations.
\newblock {\em Journal of Computational Physics}, 426:109951, feb 2021.

\bibitem{PINN_TD_2021}
Navid Zobeiry and Keith~D. Humfeld.
\newblock A physics-informed machine learning approach for solving heat
  transfer equation in advanced manufacturing and engineering applications.
\newblock {\em Engineering Applications of Artificial Intelligence},
  101:104232, 2021.

\bibitem{PINN_EM_2021}
Marco Baldan, Giacomo Baldan, and Bernard Nacke.
\newblock Solving 1d non‐linear magneto quasi‐static maxwell's equations
  using neural networks.
\newblock {\em IET Science, Measurement and Technology}, 15, 03 2021.

\bibitem{PINN_survey}
Salvatore Cuomo, Vincenzo~Schiano di~Cola, Fabio Giampaolo, Gianluigi Rozza,
  Maziar Raissi, and Francesco Piccialli.
\newblock Scientific machine learning through physics-informed neural networks:
  Where we are and what's next, 2022.

\bibitem{PINN_reweighting_2021}
Sifan Wang, Yujun Teng, and Paris Perdikaris.
\newblock Understanding and mitigating gradient flow pathologies in
  physics-informed neural networks.
\newblock {\em SIAM Journal on Scientific Computing}, 43(5):A3055--A3081, 2021.

\bibitem{PINN_new_loss_2022}
Pao-Hsiung Chiu, Jian~Cheng Wong, Chinchun Ooi, My~Ha Dao, and Yew-Soon Ong.
\newblock {CAN}-{PINN}: A fast physics-informed neural network based on
  coupled-automatic{\textendash}numerical differentiation method.
\newblock {\em Computer Methods in Applied Mechanics and Engineering},
  395:114909, may 2022.

\bibitem{lin2023bi}
Guochang Lin, Fukai Chen, Pipi Hu, Xiang Chen, Junqing Chen, Jun Wang, and
  Zuoqiang Shi.
\newblock Bi-greennet: learning green’s functions by boundary integral
  network.
\newblock {\em Communications in Mathematics and Statistics}, 11(1):103--129,
  2023.

\bibitem{deepgreen_2021}
Craig~R Gin, Daniel~E Shea, Steven~L Brunton, and J~Nathan Kutz.
\newblock Deepgreen: Deep learning of green’s functions for nonlinear
  boundary value problems.
\newblock {\em Scientific reports}, 11(1):1--14, 2021.

\bibitem{green_human_2022}
Nicolas Boull{\'e}, Christopher~J Earls, and Alex Townsend.
\newblock Data-driven discovery of green’s functions with
  human-understandable deep learning.
\newblock {\em Scientific reports}, 12(1):1--9, 2022.

\bibitem{boulle2022data}
Nicolas Boull{\'e}, Christopher~J Earls, and Alex Townsend.
\newblock Data-driven discovery of green’s functions with
  human-understandable deep learning.
\newblock {\em Scientific reports}, 12(1):4824, 2022.

\bibitem{kreyszig2008advanced}
Erwin Kreyszig, K~Stroud, and G~Stephenson.
\newblock Advanced engineering mathematics.
\newblock {\em Integration}, 9(4), 2008.

\bibitem{AlnaesEtal2015}
M.~S. Alnaes, J.~Blechta, J.~Hake, A.~Johansson, B.~Kehlet, A.~Logg,
  C.~Richardson, J.~Ring, M.~E. Rognes, and G.~N. Wells.
\newblock The {FEniCS} project version 1.5.
\newblock {\em Archive of Numerical Software}, 3, 2015.

\bibitem{ScroggsEtal2022}
M.~W. Scroggs, J.~S. Dokken, C.~N. Richardson, and G.~N. Wells.
\newblock Construction of arbitrary order finite element degree-of-freedom maps
  on polygonal and polyhedral cell meshes.
\newblock {\em ACM Transactions on Mathematical Software}, 2022.
\newblock To appear.

\bibitem{BasixJoss}
M.~W. Scroggs, I.~A. Baratta, C.~N. Richardson, and G.~N. Wells.
\newblock Basix: a runtime finite element basis evaluation library.
\newblock {\em Journal of Open Source Software}, 7(73):3982, 2022.

\bibitem{LoggEtal2012}
A.~Logg, {K.-A.} Mardal, G.~N. Wells, et~al.
\newblock {\em Automated Solution of Differential Equations by the Finite
  Element Method}.
\newblock Springer, 2012.

\bibitem{LoggWells2010}
A.~Logg and G.~N. Wells.
\newblock {DOLFIN:} automated finite element computing.
\newblock {\em {ACM} Transactions on Mathematical Software}, 37, 2010.

\bibitem{LoggEtal_10_2012}
A.~Logg, G.~N. Wells, and J.~Hake.
\newblock {DOLFIN:} a {C++/Python} finite element library.
\newblock In A.~Logg, {K.-A.} Mardal, and G.~N. Wells, editors, {\em Automated
  Solution of Differential Equations by the Finite Element Method}, volume~84
  of {\em Lecture Notes in Computational Science and Engineering}, chapter~10.
  Springer, 2012.

\bibitem{KirbyLogg2006}
R.~C. Kirby and A.~Logg.
\newblock A compiler for variational forms.
\newblock {\em {ACM} Transactions on Mathematical Software}, 32, 2006.

\bibitem{LoggEtal_11_2012}
A.~Logg, K.~B. Ølgaard, M.~E. Rognes, and G.~N. Wells.
\newblock {FFC:} the {FEniCS} form compiler.
\newblock In A.~Logg, {K.-A.} Mardal, and G.~N. Wells, editors, {\em Automated
  Solution of Differential Equations by the Finite Element Method}, volume~84
  of {\em Lecture Notes in Computational Science and Engineering}, chapter~11.
  Springer, 2012.

\bibitem{OlgaardWells2010}
K.~B. Ølgaard and G.~N. Wells.
\newblock Optimisations for quadrature representations of finite element
  tensors through automated code generation.
\newblock {\em {ACM} Transactions on Mathematical Software}, 37, 2010.

\bibitem{AlnaesEtal2014}
M.~S. Alnaes, A.~Logg, K.~B. Ølgaard, M.~E. Rognes, and G.~N. Wells.
\newblock Unified form language: A domain-specific language for weak
  formulations of partial differential equations.
\newblock {\em {ACM} Transactions on Mathematical Software}, 40, 2014.

\bibitem{kirby2010}
R.~C. Kirby.
\newblock {FIAT:} numerical construction of finite element basis functions.
\newblock In A.~Logg, {K.-A.} Mardal, and G.~N. Wells, editors, {\em Automated
  Solution of Differential Equations by the Finite Element Method}, volume~84
  of {\em Lecture Notes in Computational Science and Engineering}, chapter~13.
  Springer, 2012.

\bibitem{adam}
Diederik~P Kingma and Jimmy Ba.
\newblock Adam: A method for stochastic optimization.
\newblock {\em arXiv preprint arXiv:1412.6980}, 2014.

\end{thebibliography}

\appendix

\section{Gaussian Approximation with Segmented Domains}
In the main text, one neural network $G_{\theta}(\argr,\argxi)$ is used to represent the Green's function of some PDE problem for all $\argxi$ on the interior domain. In this section, we aim to directly compare the GaussNet scheme proposed in  \cite{GaussApprox_2022} which uses multiple neural network models, to our method which uses a single model. In more detail, for the GaussNet method the entire interior domain is segmented into several subdomains where each is represented by a neural network $G_{\theta,i}(\argr,\argxi), \forall \argr \in D, \forall \argxi \in D_{i}, \bigcup_{i}D_{i}=D^{\circ}$. It should be noted that using multiple networks should decrease the representation error especially on edge areas of the domain and potentially decrease the training difficulty. However, it cannot solve the approximation error inherent in the Gaussian approximation itself, which is solely determined by the Gaussian width. Furthermore, using multiple networks for one single domain will drastically increase the computational burden for training, as well as the storage requirement for each domain. Here, we used a Gaussian width of $s=0.02$ and $6x6$ squares to segment three different domains, (Square, Annulus, and L-Shape), following the scheme in \cite{GaussApprox_2022}. We study the Poisson equation subject to the second order polynomial input functions $f(x,y) = a_{1}x^2+a_{2}xy+a_{3}y^2+a_{4}x+a_{5}y+a_{6}$ for 10 different random realizations $a_{i}\sim\mathcal{N}(0,2)$, and for $i=1,...,6$. The ground truth is computed using FEM with a dense mesh for these domains. 

\begin{table}[h]
\caption{Averaged Numerical Error of Multiple Network Models and DGGF on 2D Domains}
\label{tab:pde_table_3}
\vskip 0.15in
\begin{center}
\begin{small}
\begin{sc}
\begin{tabular}{cccc}
\toprule
Method & Square & Annulus & L-shape \\
\midrule
GaussNet & 3.37e-2(6x6) & 9.97e-2 (6x6) & 1.43e-2(6x6)\\
GaussNet w/ Symmetry Loss & 2.76e-2(6x6) & 7.45e-2 (6x6) & 1.06e-2(6x6)\\
DGGF & 1.13e-3 & 3.47e-3 & 1.42e-3\\
\bottomrule
\end{tabular}
\end{sc}
\end{small}
\end{center}
\vskip -0.1in
\end{table}

The symmetry loss, introduced in \cite{GaussApprox_2022}, is defined as $\min|G_{\theta}(\argr,\argxi)-G_{\theta}(\argxi,\argr)|$. However, this symmetry property is valid for reciprocal systems and may not hold generally. For this reason, we did not include this symmetry constraint in the training of DGGFs, which allows our method to generalize to a wider range of PDE problems. The average run-time for these experiments is given in Table \ref{tab:pde_table_4}.

\begin{table}[h]
\caption{Training Computational Complexity for The Experiments Presented in Table \ref{tab:pde_methods_1} in GPU hours}
\label{tab:pde_table_4}
\vskip 0.15in
\begin{center}
\begin{small}
\begin{sc}
\begin{tabular}{cccc}
\toprule
Method & Square & Annulus & L-shape \\
\midrule
GaussNet &  1.89h x 36& 2.48h x 36 & 3.04h x 27\\
GaussNet w/ Symmetry Loss &  1.53h x 36 & 2.12h x 36 & 2.34h x 27\\
DGGF & 0.12h + 0.08h & 0.42h + 0.46h & 0.12h + 0.13h\\
\bottomrule
\end{tabular}
\end{sc}
\end{small}
\end{center}
\vskip -0.1in
\end{table}

For GaussNet, the averaged computation time is reported since all the neural networks for subdomains are trained in parallel. Note that for the L-shaped domain, although it is segmented into $6\times6$, this domain only fills out $27$ subdomains. For DGGF, there are two seriel steps of training two neural networks and the computation time is reported as the sum in the table. Since the GaussNet uses LBFGS for optimization, it is in general more time-consuming for each epoch compared to ADAM. In DGGF, the relaxed singularity means the generalized Green's function vary slower on the domain than the original Green's function, making it converge fast with ADAM. 

\section{Deriving the Generalized Green's Function}

In this section, we demonstrate that the generalized Green's function solved above can also construct the solution via integral operation with the input function $f$.
\begin{theorem}[Equivalence between Generalized Green's Function and Green's Function]
We have
\begin{equation}
    u(\argr) = \int_{D} G^{t}(\argr,\argxi)\Delta f(\argxi) d\argxi+ \int_{\partial D}f\frac{\partial G^{t}}{\partial n} - G^{t}\frac{\partial f}{\partial n}ds
\end{equation} 

This equation can be further simplified if $G^{t}, f$ or their partial derivatives vanish on the boundary. In that case, the relationship of the generalized Green's function and the solution function is simply,
\begin{equation}
    u(\argr) = \int_{D} G^{t}(\argr,\argxi)\Delta f(\argxi) d\argxi
\end{equation}
In other words, the solution can be expressed as the integral of the generalized function and the Laplacian of the input functions. 
\end{theorem}

The problem to be solved is to find the solution function $u$ given the input function $f$ in Eq.(1). We start with formulating a Green's identity integral of $G^{t}$ and $\lap u$ with the operator of Laplacian, 
\begin{equation}
    \int_{D}d\argr G^{t}\Delta(\lap u)-(\lap u)\Delta G^{t}  = \int_{\partial D}dsG^{t}\frac{\partial\lap u}{\partial n}-\lap u \frac{\partial G^{t}}{\partial n}
\end{equation}
Assume $\lap$ to be self-adjoint and apply again the Green's identity on it, we could get
\begin{equation}
    \int_{D} \Delta G^{t}(\lap u)-u\Delta(\lap G^{t})d\argr = \int_{\partial D}ds\gamma_{bd}
\end{equation}
where interchangeability of $\lap$ and the Dirac delta function is assumed and the boundary terms are involved with $\Delta G^{t}, u$ and their partial derivatives on the boundary depending on $\lap$. Since we have specified $\Delta G^{t}=0$ on the boundary, the boundary terms in the above equation vanish and $\int \lap u\Delta G^{t}=\int u\lap\Delta G^{t} = \int u\Delta t = \int u\Delta = u$. Therefore, by inserting this equation into Eq.(12), we have
\begin{equation}
    u(\argr) = \int_{D} G^{t}(\argr,\argxi)\Delta f(\argxi) d\argxi + \int_{\partial D}f\frac{\partial G^{t}}{\partial n}-G^{t}\frac{\partial f}{\partial n}ds
\end{equation}

\end{document}